\documentclass[letterpaper, 10 pt, conference]{ieeeconf}  % Comment this line out if you need a4paper

\usepackage{amsmath,amssymb,amsfonts}
\usepackage{bm}
\usepackage{mathtools}
\usepackage{graphicx}
\usepackage{booktabs}
\usepackage{cite}
\usepackage{algorithm}
\usepackage{algorithmic}
\usepackage{xcolor}
\usepackage[hidelinks]{hyperref} % بهتره hyperref آخر بیاد
\usepackage{subcaption}
\newtheorem{lemma}{Lemma}
\newtheorem{theorem}{Theorem}

\newtheorem{corollary}{Corollary}
\usepackage{soul}

\IEEEoverridecommandlockouts                              % This command is only needed if 
\title{\LARGE \bf
Koopman-Lifted Finite-Memory Identification via Truncated Gr\"unwald--Letnikov Kernels
}

\author{Navid Mojahed$^{1*}$, Mahdis Rabbani$^{1*}$, and Shima Nazari$^{1}$%
\thanks{$^{*}$These authors contributed equally to this work.}
\thanks{$^{1}$The authors are with the Department of Mechanical and Aerospace Engineering, University of California, Davis, CA 95616, USA. 
        {\tt\small nmojahed@ucdavis.edu,}}%
}

\begin{document}

\maketitle
\thispagestyle{empty}
\pagestyle{empty}

%%%%%%%%%%%%%%%%%%%%%%%%%%%%%%%%%%%%%%%%%%%%%%%%%%%%%%%%%%%%%%%%%%%%%%%%%%%%%%%%
\begin{abstract}

% We propose a data-driven linear modeling framework for controlled nonlinear hereditary (history-dependent) systems that combines Koopman lifting with a truncated Gr\"unwald--Letnikov memory term. Nonlinear effects are represented through a chosen dictionary of observables, while history dependence is imposed through fixed fractional-difference weights, so the predictor remains linear in the unknown lifted system matrices. This yields a memory-compensated regression and enables least-squares identification directly from input--state data, extending standard Koopman-based identification beyond the Markovian setting. We also construct an equivalent augmented Markovian realization by stacking a finite window of lifted states, thereby rewriting the finite-memory recursion as a standard discrete-time linear state-space model. Numerical experiments on a nonlinear hereditary benchmark with a non-Gr\"unwald--Letnikov ground-truth kernel of Prony-series type show improved multi-step open-loop prediction accuracy relative to memoryless Koopman and non-lifted linear baselines.

We propose a data-driven linear modeling framework for controlled nonlinear hereditary systems that combines Koopman lifting with a truncated Gr\"unwald--Letnikov memory term. The key idea is to model nonlinear state dependence through a lifted observable representation while imposing history dependence directly in the lifted coordinates through fixed fractional-difference weights. This preserves linearity in the lifted state-transition and input matrices, yielding a memory-compensated regression that can be identified from input--state data by least squares and extending standard Koopman-based identification beyond the Markovian setting. We further derive an equivalent augmented Markovian realization by stacking a finite window of lifted states, thereby rewriting the finite-memory recursion as a standard discrete--time linear state--space model. Numerical experiments on a nonlinear hereditary benchmark with a non-Gr\"unwald--Letnikov Prony-series ground-truth kernel demonstrate improved multi-step open-loop prediction accuracy relative to memoryless Koopman and non-lifted state-space baselines.

% Fractional-order (FO) robot–environment interaction models matter because they represent non-Markovian physics—i.e., the future depends on a history-weighted past—while standard Koopman/EDMD identification is fundamentally built around a (possibly lifted) Markov state evolving by composition
\end{abstract}

\section{Introduction}
\label{sec:intro}
Many engineering systems exhibit hereditary dynamics, in which the present state depends not only on the current state and control input, but also on past values. Such behavior appears in a broad range of applications, including dielectric phenomena, viscoelastic and compliant mechanical systems, and interaction-rich robotic settings with friction, hysteresis, or relaxation effects \cite{BagleyTorvik1983,RossikhinShitikova2010AMR,Caputo2001DistributedOrder,Xun2023FOFriction,RusTolley2015SoftRobots}. Fractional-order models have gained significant attention as a powerful framework for describing these processes, since they capture long-memory temporal correlations that are difficult to represent accurately using low-order integer-order models \cite{BagleyTorvik1983,RossikhinShitikova2010AMR}.

Although fractional-order modeling applies to both linear and nonlinear systems \cite{Podlubny1999FDE,das2025frequency}, this work focuses on the nonlinear setting, where hereditary effects and state-dependent nonlinearities must be addressed simultaneously. In such systems, memoryless integer-order models can be inadequate, especially in long-horizon prediction and optimization-based control, where small one-step model errors can accumulate and degrade closed-loop performance \cite{BagleyTorvik1983,RossikhinShitikova2010AMR,Podlubny1999FDE,KordaMezic2018KoopmanMPC}.

% Although fractional-order modeling applies to both linear and nonlinear systems \cite{Podlubny1999FDE,das2025frequency}, this paper focuses on the nonlinear case because of its practical importance and broad relevance. In such settings, the challenge is often twofold: the dynamics are not only nonlinear, which makes prediction and control computationally demanding, but are also poorly represented by memoryless integer-order models \cite{BagleyTorvik1983,RossikhinShitikova2010AMR,Podlubny1999FDE}. This becomes particularly important in long-horizon prediction and optimization-based control, where small one-step errors caused by unmodeled hereditary effects can accumulate over time and degrade the internal model used for real-time planning and feedback \cite{KordaMezic2018KoopmanMPC}.

Classical simplifications for nonlinear systems, such as operating-point linearization and Taylor-series approximations, are often local in nature and tend to lose accuracy as the system moves away from the operating regime \cite{618ecfce-dba1-3257-a376-142efb39f487}. Koopman-based methods offer an alternative by lifting the state to a higher-dimensional space of observables and seeking an approximately linear predictor there \cite{Mezic2005SpectralProperties,Rowley2009KoopmanModes}. Through data-driven identification procedures such as DMD/EDMD \cite{Proctor2016DMDc,WilliamsKevrekidisRowley2015EDMD,AbtahiAraghiMojahedNazari2025DeepBilinearKoopman}, these methods have attracted increasing attention for nonlinear prediction and control.

However, standard Koopman identification pipelines are typically formulated as one-step Markovian predictors in the lifted space \cite{KordaMezic2018KoopmanMPC}, which can lead to substantial plant--model mismatch when applied to hereditary or fractional-order systems. Some recent works combine Koopman-based plant representations with fractional-order controllers \cite{Rahmani2024WormRobot,Zhang2025Exoskeleton,MojahedFatoorehchiNazari2025FracSurvey}. While such approaches can be effective in practice, they do not directly address the identification of a lifted predictor for a plant whose dynamics are themselves intrinsically hereditary. This motivates the study of Koopman identification for genuinely fractional-order dynamics.

% Several recent directions have begun to relax the purely Markovian viewpoint in Koopman modeling. One line of work uses delay or history augmentation to obtain an approximately Markovian representation in an expanded coordinate system \cite{Takens1981,ArbabiMezic2017HankelDMD,Brunton2017HAVOK}. A second line modifies the evolution model itself to incorporate memory, as in dynamic mode decomposition with memory (DMDm) and related fractional DMD variants \cite{Anzaki2023DMDm}. More recently, memory-aware Koopman formulations have introduced explicit non-Markovian closures, including Mori--Zwanzig-inspired reduced models and latent memory corrections \cite{Catalasan2026SoftRobot,Gupta2025MZAE}. A further direction changes the operator-theoretic setting so that the objects acted on are trajectories or signals rather than instantaneous states, thereby enabling nonlocal and fractional-order operator constructions \cite{Rosenfeld2021OKHS}.

Several recent directions have begun to relax the purely Markovian viewpoint in Koopman modeling. One line of work uses delay or history augmentation to obtain approximately Markovian representations in expanded coordinates \cite{Takens1981,ArbabiMezic2017HankelDMD,Brunton2017HAVOK}. Another incorporates memory directly into the evolution model, as in dynamic mode decomposition with memory and related fractional variants \cite{Anzaki2023DMDm}. More recent developments include explicit non-Markovian closures inspired by Mori--Zwanzig theory and latent memory corrections \cite{Catalasan2026SoftRobot,Gupta2025MZAE}, as well as operator-theoretic settings defined on trajectories or signals rather than instantaneous states \cite{Rosenfeld2021OKHS}. Despite these advances, the problem of learning a structured, control-friendly lifted predictor for intrinsically hereditary dynamics remains largely open.

Although these directions are notable, they do not fully resolve the identification problem considered here: learning a structured, control-friendly lifted predictor for a plant whose dynamics are intrinsically non-Markovian.

This paper adopts that hereditary viewpoint explicitly. Rather than starting from a memoryless plant and enriching the observable space with delays, we assume from the outset that the plant is history-dependent. Koopman lifting is used to capture nonlinear state dependence, while memory is modeled directly in the lifted coordinates through a truncated Gr\"unwald--Letnikov (GL) representation. Because GL discretizations express fractional differintegrals as weighted sums of past samples, they induce a structured convolutional memory form that is compatible with finite-memory approximation, lifted regression, and history stacking \cite{Podlubny1999FDE,ChenPetrasXue2009Tutorial}. This yields a finite-memory lifted predictor that can be rewritten, by stacking a finite window of lifted states, as a standard one-step model on an augmented state. In this way, the proposed framework connects Koopman-based learning with finite-memory hereditary approximation while remaining compatible with data-driven prediction and control \cite{Podlubny1999FDE,MacDonald2015AdaptiveMemory}.

The main contributions of this paper are as follows:
\begin{enumerate}
    \item We propose a Koopman-compatible finite-memory modeling framework for nonlinear hereditary systems, in which memory is modeled explicitly in lifted coordinates through a truncated Gr\"unwald--Letnikov term rather than being ignored under a Markovian lifted predictor.

    \item We derive a memory-compensated identification reformulation that remains linear in the lifted state-transition and input matrices, enabling least-squares learning of the finite-memory lifted model directly from input--state data.

    \item We derive an exact augmented Markovian realization by stacking a finite lifted history, thereby converting the non-Markovian finite-memory recursion into a standard one-step state-space model.

    \item We quantify the effect of finite-memory approximation through bounds on kernel mismatch and neglected memory tail, and validate the overall framework on a nonlinear hereditary benchmark with a non-GL ground-truth kernel.
\end{enumerate}

\section{Problem Statement \& Background}
\label{sec:setup}

We consider a discrete-time nonlinear hereditary system of the form
\begin{equation}
\label{eq:hereditary_conv}
x_{k+1}
=
f(x_k,u_k)
+
\sum_{j=1}^{J_\text{ref}} h_j\, g(x_{k+1-j})
+
\eta_k,
\quad k \in \mathbb{N},
\end{equation}
where $x_k \in \mathbb{R}^n$ and $u_k \in \mathbb{R}^m$ denote the state and control input, respectively, $f:\mathbb{R}^n \times \mathbb{R}^m \to \mathbb{R}^n$ represents the instantaneous state-update map, $g:\mathbb{R}^n \to \mathbb{R}^n$ describes the contribution of past states to the current evolution, $\{h_j\}_{j=1}^{J_\text{ref}} \subset \mathbb{R}$ defines the memory kernel given a memory length $J_\text{ref} \in \mathbb{N}$, and $\eta_k \in \mathbb{R}^n$ collects exogenous disturbances and model mismatch. Unlike a Markovian system, the evolution in \eqref{eq:hereditary_conv} depends not only on the current state and input, but also on the history of past states through the convolution term.

% \subsection{Koopman Linear Representation}
% \label{sec:koopman}

To obtain a linear predictor in lifted coordinates of the nonlinear dynamics in \eqref{eq:hereditary_conv}, we adopt a finite-dimensional Koopman lifting in the spirit of EDMD/EDMDc~\cite{WilliamsKevrekidisRowley2015EDMD}. Specifically, we define the lifted state as
\begin{equation}
\label{eq:lift_def}
z_k := \psi(x_k) \in \mathbb{R}^p,
\end{equation}
where the vector-valued observable map $\psi(\cdot)$ is chosen as
\begin{equation}
\label{eq:psi_structure}
\psi(x)
=
\begin{bmatrix}
1\\
x\\
\phi(x)
\end{bmatrix},
\quad
\phi:\mathbb{R}^n \to \mathbb{R}^{p_\phi},
\quad
p = 1+n+p_\phi.
\end{equation}
Here, $\phi(x)$ denotes a user-chosen collection of nonlinear observable functions of the state, used to enrich the lifted representation beyond constant and linear terms.
With this construction, the original state remains an exact linear readout of the lifted coordinates:
\begin{equation}
\label{eq:koopman_output}
x_k = C z_k,
\qquad
C :=
\begin{bmatrix}
0 & I_n & 0_{p_\phi}
\end{bmatrix}
\in \mathbb{R}^{n \times p}.
\end{equation}

In standard EDMDc, one seeks lifted state-transition matrix $\bar A \in \mathbb{R}^{p \times p}$ and lifted input matrix $\bar B \in \mathbb{R}^{p \times m}$ such that the lifted dynamics are approximated by the Markovian predictor
\begin{equation}
\label{eq:edmdc_base}
z_{k+1} = \bar A z_k + \bar B u_k + r_k,
\end{equation}
where $r_k \in \mathbb{R}^p$ captures finite-dimensional closure error and other unmodeled effects. The matrices $(\bar A,\bar B)$ are typically identified from snapshot data through a least-squares regression on lifted state-input pairs.

Model~\eqref{eq:edmdc_base}, however, remains Markovian in the lifted coordinates and therefore does not explicitly account for the hereditary effects in \eqref{eq:hereditary_conv}. This motivates the memory-augmented lifted model developed next.

\section{Koopman--GL Finite-Memory Identification}
\label{sec:koopman_gl_identification}

To address this limitation while preserving a regression form in the lifted state-transition and input matrices, we introduce a finite-memory Gr\"unwald--Letnikov (GL) correction in the lifted coordinates.

For a fractional order $\alpha \in (0,1)$, the discrete-time GL fractional difference of a lifted sequence $\{z_k\}$ is defined as~\cite{Podlubny1999FDE,Diethelm2010}
\begin{equation}
\label{eq:gl_operator_def}
(\Delta^\alpha z)_{k+1}
:=
\sum_{j=0}^{k+1} w_j(\alpha)\, z_{k+1-j},
\end{equation}
where the GL coefficients are given by
\begin{equation}
\label{eq:gl_weights_def}
w_j(\alpha)=(-1)^j\binom{\alpha}{j},
\qquad j\ge 0,
\end{equation}
with \(w_0(\alpha)=1\), and satisfy the recursion
\begin{equation}
\label{eq:gl_weights_recursion}
w_j(\alpha)
=
-\,w_{j-1}(\alpha)\,\frac{\alpha-(j-1)}{j},
\qquad j\geq 1.
\end{equation}

Any known constant scaling, such as $\Delta t^{-\alpha}$ in a fractional-derivative discretization, can be absorbed into the coefficients $w_j(\alpha)$.

Given a truncated memory length $N \in \mathbb{N}\, , \, N\leq J_\text{ref}$, a fractional order $\alpha \in (0,1)$, and an initial lifted history
$\{z_0,z_1,\dots,z_{N-1}\}$, we propose the finite-memory lifted model
\begin{equation}
\label{eq:memory_model_local}
z_{k+1}
=
\bar A z_k + \bar B u_k
-
\sum_{j=1}^{N} w_j(\alpha)\, z_{k+1-j}
+
d_k,
\quad k \ge N-1,
\end{equation}
where $d_k \in \mathbb{R}^p$ collects the residual effects not captured by the imposed finite-memory GL structure, including finite-dimensional Koopman closure error, finite-memory truncation error, and possible mismatch between the true hereditary kernel and the adopted GL kernel.

Model \eqref{eq:memory_model_local} may therefore be viewed as a structured hereditary correction of the Markovian lifted predictor \eqref{eq:edmdc_base}: the matrices $(\bar A,\bar B)$ capture the instantaneous lifted dynamics, while the GL convolution term accounts for memory effects through a finite history of lifted states.

\subsection{Memory-Compensated Regression}
\label{sec:memory_comp_regression}

Suppose a sequence of lifted snapshots and inputs is available over a data horizon $T$, together with the initial lifted history $\{z_0,\dots,z_{N-1}\}$, and let the memory length satisfy $N<T$. For each $k=N-1,\dots,T-1$, define the memory-compensated target
\begin{equation}
\label{eq:yk_def}
y_k
:=
z_{k+1}
+
\sum_{j=1}^{N} w_j(\alpha)\, z_{k+1-j}
\in \mathbb{R}^p.
\end{equation}
Substituting \eqref{eq:memory_model_local} into \eqref{eq:yk_def} yields
\begin{equation}
\label{eq:regression}
y_k = \bar A z_k + \bar B u_k + d_k.
\end{equation}
Thus, after moving the known GL memory term to the left-hand side, the hereditary lifted dynamics reduce to a linear regression in the current lifted state and input.

For identification, stack the data as
\[
\mathbf{Y}
=
\begin{bmatrix}
y_{N-1} & \cdots & y_{T-1}
\end{bmatrix},
\quad
\mathbf{Z}
=
\begin{bmatrix}
z_{N-1} & \cdots & z_{T-1}
\end{bmatrix},
\]
\[
\mathbf{U}
=
\begin{bmatrix}
u_{N-1} & \cdots & u_{T-1}
\end{bmatrix},
\quad
\mathbf{D}
=
\begin{bmatrix}
d_{N-1} & \cdots & d_{T-1}
\end{bmatrix}.
\]
Then \eqref{eq:regression} can be written compactly as
\begin{equation}
\label{eq:stacked}
\mathbf{Y}
=
\Theta^\star \Omega + \mathbf{D},
\qquad
\Omega
:=
\begin{bmatrix}
\mathbf{Z}\\
\mathbf{U}
\end{bmatrix},
\qquad
\Theta^\star
:=
\begin{bmatrix}
\bar A & \bar B
\end{bmatrix}.
\end{equation}

Given \eqref{eq:stacked}, identification of the lifted state-transition and input matrices reduces to the least-squares problem
\begin{equation}
\label{eq:ls}
\hat\Theta
\in
\arg\min_{\Theta}\ \|\mathbf{Y}-\Theta\Omega\|_F^2.
\end{equation}
Under a standard excitation condition, namely that $\Omega$ has full row rank, problem \eqref{eq:ls} admits a unique least-squares minimizer with the usual closed-form expression
\[
\hat\Theta=\mathbf{Y}\Omega^\dagger
=
\mathbf{Y}\Omega^\top(\Omega\Omega^\top)^{-1}.
\]
The following result characterizes the resulting identification error in terms of the aggregate disturbance matrix $\mathbf D$.

\begin{theorem}[Identification error bound]
\label{thrm:ls_bound}
Suppose $\Omega$ has full row rank and satisfies $\Omega\Omega^\top \succeq \mu I_{p+m}$ for some $\mu>0$. If the stacked regression satisfies \eqref{eq:stacked}, then the least-squares estimator $\hat\Theta=\mathbf{Y}\Omega^\dagger$ obeys
\begin{equation}
\label{eq:exact_error_compact}
\hat\Theta-\Theta^\star=\mathbf{D}\Omega^\dagger,
\end{equation}
and therefore
\begin{equation}
\label{eq:ls_error_bound_compact}
\|\hat\Theta-\Theta^\star\|_F
\le
\|\mathbf{D}\|_F\,\|\Omega^\dagger\|_2
\le
\frac{\|\mathbf{D}\|_F}{\sqrt{\mu}}.
\end{equation}
\end{theorem}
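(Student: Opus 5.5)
The plan is a direct algebraic computation followed by two standard norm inequalities; no earlier result beyond the regression identity \eqref{eq:stacked} is needed.

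\textbf{Step 1: the exact error identity.} Full row rank of $\Omega\in\mathbb{R}^{(p+m)\times L}$ makes $\Omega\Omega^\top$ invertible, so $\Omega^\dagger=\Omega^\top(\Omega\Omega^\top)^{-1}$ and
\[
\Omega\Omega^\dagger=\Omega\Omega^\top(\Omega\Omega^\top)^{-1}=I_{p+m}.
\]
Substituting $\mathbf{Y}=\Theta^\star\Omega+\mathbf{D}$ into $\hat\Theta=\mathbf{Y}\Omega^\dagger$ gives
\[
\hat\Theta=\Theta^\star\Omega\Omega^\dagger+\mathbf{D}\Omega^\dagger=\Theta^\star+\mathbf{D}\Omega^\dagger,
\]
which is \eqref{eq:exact_error_compact}.

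\textbf{Step 2: first inequality.} We use the mixed submultiplicativity $\|MN\|_F\le\|M\|_F\|N\|_2$. To justify it, write $\|MN\|_F^2=\sum_i\|m_i^\top N\|_2^2$ over the rows $m_i^\top$ of $M$. Each term satisfies $\|m_i^\top N\|_2\le\|m_i\|_2\|N\|_2$, and summing yields the claim. Applying this with $M=\mathbf{D}$ and $N=\Omega^\dagger$ gives $\|\hat\Theta-\Theta^\star\|_F\le\|\mathbf{D}\|_F\|\Omega^\dagger\|_2$.

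\textbf{Step 3: bounding $\|\Omega^\dagger\|_2$.} Take the SVD $\Omega=U\Sigma V^\top$ with singular values $\sigma_1\ge\dots\ge\sigma_{p+m}>0$. Then $\Omega^\dagger=V\Sigma^{-1}U^\top$, so $\|\Omega^\dagger\|_2=1/\sigma_{\min}(\Omega)$. Equivalently, $(\Omega^\dagger)^\top\Omega^\dagger=(\Omega\Omega^\top)^{-1}$, so $\|\Omega^\dagger\|_2^2=\lambda_{\max}\big((\Omega\Omega^\top)^{-1}\big)=1/\lambda_{\min}(\Omega\Omega^\top)$. The hypothesis $\Omega\Omega^\top\succeq\mu I$ gives $\lambda_{\min}(\Omega\Omega^\top)\ge\mu$, hence $\|\Omega^\dagger\|_2\le 1/\sqrt{\mu}$, which completes \eqref{eq:ls_error_bound_compact}.

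\textbf{Main point of care.} There is no real obstacle. The only places needing attention are that $\Omega\Omega^\dagger=I$ requires the \emph{row}-rank hypothesis, and that the mixed Frobenius/spectral inequality is applied in the right order ($\mathbf{D}$ on the left, $\Omega^\dagger$ on the right). The identification with $\lambda_{\min}(\Omega\Omega^\top)$ is also what makes the constant $\mu^{-1/2}$ rather than $\mu^{-1}$.
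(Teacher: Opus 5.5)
Your proposal is correct and follows essentially the same route as the paper's proof. Both substitute \eqref{eq:stacked} into $\hat\Theta=\mathbf{Y}\Omega^\dagger$, use $\Omega\Omega^\dagger=I$ from full row rank, apply $\|MN\|_F\le\|M\|_F\|N\|_2$, and conclude with $\|\Omega^\dagger\|_2=1/\sigma_{\min}(\Omega)\le 1/\sqrt{\mu}$. Your row-wise proof of the mixed norm inequality and the identity $\|\Omega^\dagger\|_2^2=1/\lambda_{\min}(\Omega\Omega^\top)$ just spell out steps the paper leaves implicit.
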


\begin{proof}
Since $\Omega$ has full row rank, the least-squares minimizer is uniquely given by $\hat\Theta=\mathbf{Y}\Omega^\dagger$. Substituting \eqref{eq:stacked} yields
\[
\hat\Theta
=
(\Theta^\star\Omega+\mathbf{D})\Omega^\dagger
=
\Theta^\star(\Omega\Omega^\dagger)+\mathbf{D}\Omega^\dagger.
\]
Because $\Omega$ has full row rank, $\Omega\Omega^\dagger=I$, which proves \eqref{eq:exact_error_compact}. Taking Frobenius norms and using submultiplicativity gives
\[
\|\hat\Theta-\Theta^\star\|_F
\le
\|\mathbf{D}\|_F\,\|\Omega^\dagger\|_2.
\]
Finally, $\Omega\Omega^\top \succeq \mu I$ implies $\sigma_{\min}(\Omega)\ge \sqrt{\mu}$ and hence
\[
\|\Omega^\dagger\|_2
=
\frac{1}{\sigma_{\min}(\Omega)}
\le
\frac{1}{\sqrt{\mu}},
\]
which proves \eqref{eq:ls_error_bound_compact}.
\end{proof}

It is noteworthy that in practice, the stacked quantities are formed from measured lifted snapshots $\tilde z_k=\psi(\tilde x_k)$; accordingly, the targets are computed as
\[
\tilde y_k := \tilde z_{k+1} + \sum_{j=1}^{N} w_j(\alpha)\tilde z_{k+1-j},
\]
and the resulting lifting and measurement effects are absorbed into the disturbance term $\mathbf D$.

% ===============================================================================================
% ===============================================================================================
% ===============================================================================================

\section{Finite-Memory Approximation Error}
\label{sec:finite_memory_approx}

Theorem~\ref{thrm:ls_bound} shows that the identification error is controlled by the aggregate disturbance matrix $\mathbf D$. We now make one important component of this disturbance explicit, namely the approximation error induced by replacing an infinite-memory kernel with the truncated GL kernel used in \eqref{eq:memory_model_local}. The main contribution of this section is an explicit decomposition of this error into a retained-lag kernel-mismatch term and a neglected-tail term, followed by bounds that reveal how the approximation depends on the memory length $N$ and the fractional order $\alpha$.

To expose this approximation error, consider the reference lifted model
\begin{equation}
\label{eq:reference_memory_model}
z_{k+1}
=
\bar A z_k + \bar B u_k
-
\sum_{j=1}^{k+1} c_j^\star\, z_{k+1-j}
+
\xi_k,
\quad k\ge 0,
\end{equation}
where $\{c_j^\star\}_{j\ge 1}$ is a scalar reference memory kernel shared across the lifted coordinates, and $\xi_k\in\mathbb{R}^p$ collects exogenous disturbances and residual lifted-model mismatch. The proposed predictor \eqref{eq:memory_model_local} approximates the reference kernel $\{c_j^\star\}_{j\ge 1}$ by a truncated GL kernel, i.e., by taking $w_j(\alpha)$ for $j=1,\dots,N$ and neglecting all lags beyond $N$. Comparing \eqref{eq:reference_memory_model} with \eqref{eq:memory_model_local} gives, for all $k\ge N-1$,
\begin{equation}
\label{eq:dk_def_general}
d_k
=
\sum_{j=1}^{N}\big(w_j(\alpha)-c_j^\star\big)\,z_{k+1-j}
-
\sum_{j=N+1}^{k+1} c_j^\star\,z_{k+1-j}
+
\xi_k.
\end{equation}
Thus, the additive disturbance consists of three contributions: (i) a \emph{kernel-mismatch term} over the retained lags $j=1,\dots,N$, (ii) a \emph{tail term} due to neglecting lags beyond $N$, and (iii) the residual term $\xi_k$.

This decomposition is particularly relevant for the numerical experiments in Section~\ref{sec:experiments}, where the reference kernel is chosen from a non-GL family (Prony series). In that case, the retained-lag mismatch term in \eqref{eq:dk_def_general} generally does not vanish, even if the GL surrogate yields good predictive performance.

Define
\begin{equation}
\label{eq:eps_kernel_def}
\varepsilon_N(\alpha;c^\star)
:=
\sum_{j=1}^{N}\big|c_j^\star-w_j(\alpha)\big|
+
\sum_{j=N+1}^{\infty}|c_j^\star|.
\end{equation}

\begin{theorem}[Finite-memory modeling error bound]
\label{prop:kernel_mismatch}
Assume $\|z_k\|_2\le M_z$ for all $k=0,\dots,T$. Then, for every $k=N-1,\dots,T-1$, the disturbance in \eqref{eq:dk_def_general} satisfies
\begin{equation}
\label{eq:general_dk_bound}
\|d_k\|_2
\le
M_z\,\varepsilon_N(\alpha;c^\star)
+
\|\xi_k\|_2.
\end{equation}
In particular, if $\|\xi_k\|_2\le \bar\xi$, then
\begin{equation}
\label{eq:general_dk_bound_uniform}
\|d_k\|_2
\le
M_z\,\varepsilon_N(\alpha;c^\star)
+
\bar\xi.
\end{equation}
\end{theorem}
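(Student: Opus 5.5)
We start from the explicit decomposition \eqref{eq:dk_def_general}, which is valid for every $k\ge N-1$ and hence for every $k=N-1,\dots,T-1$. The plan is to apply the triangle inequality to the three contributions separately and then bound each sum term by term using the uniform bound $\|z_k\|_2\le M_z$. This gives
\[
\|d_k\|_2
\le
\sum_{j=1}^{N}\big|w_j(\alpha)-c_j^\star\big|\,\|z_{k+1-j}\|_2
+
\sum_{j=N+1}^{k+1}|c_j^\star|\,\|z_{k+1-j}\|_2
+
\|\xi_k\|_2 .
\]
Here we use that each $c_j^\star$ and $w_j(\alpha)$ is a scalar, so $\|c\,z\|_2=|c|\,\|z\|_2$.

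Next we check the index ranges so that the hypothesis on $\{z_k\}$ applies. For $k\in\{N-1,\dots,T-1\}$ and $j\in\{1,\dots,k+1\}$ we have $k+1-j\in\{0,\dots,k\}\subseteq\{0,\dots,T\}$. Hence every $z_{k+1-j}$ appearing in either sum satisfies $\|z_{k+1-j}\|_2\le M_z$. Factoring out $M_z$ bounds the first sum by $M_z\sum_{j=1}^{N}|c_j^\star-w_j(\alpha)|$ and the second by $M_z\sum_{j=N+1}^{k+1}|c_j^\star|$.

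To finish, we extend the finite tail sum $\sum_{j=N+1}^{k+1}|c_j^\star|$ to the infinite sum $\sum_{j=N+1}^{\infty}|c_j^\star|$, which is valid because all terms are nonnegative. If $k+1\le N$, the tail sum is empty and the inequality is trivial. Combining the two pieces reproduces exactly $M_z\,\varepsilon_N(\alpha;c^\star)$ from \eqref{eq:eps_kernel_def}, which yields \eqref{eq:general_dk_bound}. The uniform version \eqref{eq:general_dk_bound_uniform} then follows immediately by substituting $\|\xi_k\|_2\le\bar\xi$.

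There is no real obstacle in this argument. The only point requiring care is the case where the tail series in $\varepsilon_N$ diverges. In that case $\varepsilon_N=\infty$ and the bound holds vacuously, so it is worth noting that the statement is informative only when $\sum_j|c_j^\star|<\infty$. This summability holds, for instance, for Prony-type kernels and for GL-like kernels, since $\sum_j|w_j(\alpha)|<\infty$ for $\alpha\in(0,1)$.
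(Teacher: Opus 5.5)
Your proposal is correct and follows the paper's proof essentially verbatim. It applies the triangle inequality to the three-term decomposition \eqref{eq:dk_def_general}, uses the uniform bound $\|z_\ell\|_2\le M_z$, and extends the finite tail $\sum_{j=N+1}^{k+1}|c_j^\star|$ to the infinite one; your index-range check and summability remark are harmless refinements that the paper leaves implicit.
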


\begin{proof}
From \eqref{eq:dk_def_general}, the triangle inequality and the uniform bound $\|z_\ell\|_2\le M_z$ give
\[
\|d_k\|_2
\le
M_z\underbrace{\sum_{j=1}^{N}\big|w_j(\alpha)-c_j^\star\big|}_{\text{retained-lag mismatch}}
+
M_z\underbrace{\sum_{j=N+1}^{k+1}|c_j^\star|}_{\text{neglected tail}}
+
\|\xi_k\|_2.
\]
Since $\sum_{j=N+1}^{k+1}|c_j^\star|\le \sum_{j=N+1}^{\infty}|c_j^\star|$, the claim immediately follows from \eqref{eq:eps_kernel_def}.
\end{proof}

We next specialize to the GL coefficients and quantify the decay of their tail. This yields an explicit truncation rate when the reference kernel is itself GL, or when the GL tail is used as a structured surrogate for long-memory effects.

\begin{lemma}[Decay and tail mass of GL coefficients]
\label{lem:gl_decay}
For any fixed $\alpha\in(0,1)$, there exists a constant $C_\alpha>0$ such that
\begin{equation}
|w_j(\alpha)| \le C_\alpha\, j^{-(1+\alpha)},
\qquad \forall j\ge 1.
\end{equation}
Consequently, the GL tail mass $\delta_N(\alpha)$ satisfies
\begin{equation*}
\delta_N(\alpha):=\sum_{j=N+1}^{\infty}|w_j(\alpha)| \,\le\, \frac{C_\alpha}{\alpha}\,N^{-\alpha}.
\end{equation*}
\end{lemma}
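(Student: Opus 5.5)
The plan is to write $|w_j(\alpha)|$ as an explicit product using the recursion \eqref{eq:gl_weights_recursion}, bound that product via logarithms, and then compare the tail sum with an integral.

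\textbf{Sign and product form.} For $\alpha\in(0,1)$ and $j\ge1$, the factor $(\alpha-(j-1))/j$ is positive for $j=1$ and negative for $j\ge2$. Hence $w_1=-\alpha<0$ and all $w_j$ with $j\ge1$ share the same (negative) sign. In absolute value, the recursion becomes
\[
|w_j(\alpha)| = |w_{j-1}(\alpha)|\,\frac{j-1-\alpha}{j} = |w_{j-1}(\alpha)|\Big(1-\frac{1+\alpha}{j}\Big),\qquad j\ge2 .
\]
Starting from $|w_1|=\alpha$, this gives
\[
|w_j(\alpha)|=\alpha\prod_{i=2}^{j}\Big(1-\tfrac{1+\alpha}{i}\Big).
\]

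\textbf{Pointwise decay.} Each factor lies in $(0,1)$, so we may use $\log(1-t)\le -t$ to get
\[
\log|w_j|\le \log\alpha-(1+\alpha)\sum_{i=2}^{j}\tfrac1i .
\]
Combining this with $\sum_{i=2}^{j} 1/i\ge \log(j+1)-\log 2 \ge \log j-\log 2$ yields
\[
|w_j(\alpha)|\le \alpha\,2^{1+\alpha}\, j^{-(1+\alpha)} .
\]
This bound also holds at $j=1$, since $\alpha\le \alpha\,2^{1+\alpha}$. So one may take $C_\alpha=\alpha 2^{1+\alpha}$, which is at most $4\alpha$.

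Alternatively, the identity $|w_j|=\Gamma(j-\alpha)/(\Gamma(-\alpha)\Gamma(j+1))\cdot(-1)$ together with Gautschi's/Wendel's inequality gives the sharp asymptotic $j^{-1-\alpha}/|\Gamma(-\alpha)|$. The elementary log bound is enough, though, and avoids Gamma-function asymptotics.

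\textbf{Tail mass.} Since $t\mapsto t^{-(1+\alpha)}$ is decreasing, each term satisfies $j^{-(1+\alpha)}\le\int_{j-1}^{j}t^{-(1+\alpha)}\,dt$. Summing over $j\ge N+1$ gives
\[
\delta_N(\alpha)\le C_\alpha\int_N^\infty t^{-(1+\alpha)}\,dt=\frac{C_\alpha}{\alpha}N^{-\alpha},
\]
which is the stated bound.

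The main obstacle is the pointwise step: the sign bookkeeping in the recursion and a clean harmonic-sum lower bound. Everything after that is a routine integral comparison.
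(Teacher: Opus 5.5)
Your proof is correct, but it reaches the pointwise bound by a more elementary route than the paper.

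The paper only sketches its argument. It invokes the Gamma-function representation $w_j(\alpha)=\Gamma(j-\alpha)/\bigl(\Gamma(-\alpha)\Gamma(j+1)\bigr)$ and the standard asymptotics of $\Gamma(j-\alpha)/\Gamma(j+1)$ to get $|w_j(\alpha)|=O(j^{-(1+\alpha)})$. It then uses the same integral comparison you use for the tail.

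You instead unroll \eqref{eq:gl_weights_recursion} into $|w_j(\alpha)|=\alpha\prod_{i=2}^{j}(1-\tfrac{1+\alpha}{i})$ and bound it with $\log(1-t)\le -t$ and $\sum_{i=2}^{j}1/i\ge\log(j/2)$. The sign bookkeeping and each inequality check out.

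What your route gains is a non-asymptotic bound valid for every $j\ge 1$, with the explicit constant $C_\alpha=\alpha\,2^{1+\alpha}$. Hence $\delta_N(\alpha)\le 2^{1+\alpha}N^{-\alpha}\le 4N^{-\alpha}$, uniformly in $\alpha$. The paper's $O(\cdot)$ statement, by contrast, needs an easy but unstated extra step to give a constant valid for all $j\ge 1$: the convergent sequence $j^{1+\alpha}|w_j(\alpha)|$ is bounded. The paper also leaves that constant implicit.

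What the Gamma route gains is the exact leading behavior $|w_j(\alpha)|\sim \alpha\, j^{-(1+\alpha)}/\Gamma(1-\alpha)$. This shows the exponent $1+\alpha$ cannot be improved and identifies the correct asymptotic constant. Neither refinement is needed for the lemma as stated.
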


\begin{proof}
The result follows from standard asymptotic properties of the generalized binomial coefficients defining the GL weights. In particular, the coefficients admit a Gamma-function representation, from which one obtains the polynomial decay
\[
|w_j(\alpha)| = O\!\left(j^{-(1+\alpha)}\right).
\]
The bound on the tail mass then follows by summing this decay estimate over $j\ge N+1$ and applying a comparison with the corresponding improper integral. The full derivation is omitted due to space constraints.
\end{proof}

\begin{corollary}[Pure truncation bound for exact GL kernel]
\label{cor:gl_trunc}
Suppose the reference kernel in \eqref{eq:reference_memory_model} is exactly GL, i.e.,
\[
c_j^\star=w_j(\alpha), \qquad \forall j\ge 1.
\]
Then the retained-lag mismatch term in \eqref{eq:dk_def_general} vanishes, and for all $k=N-1,\dots,T-1$,
\begin{equation}
\label{eq:trunc_bound_gl_special}
\|d_k\|_2
\le
M_z\,\delta_N(\alpha)+\|\xi_k\|_2.
\end{equation}
In particular, Lemma~\ref{lem:gl_decay} implies
\begin{equation}
\label{eq:trunc_rate_gl_special}
\|d_k\|_2
\le
M_z\,\frac{C_\alpha}{\alpha}N^{-\alpha}
+\|\xi_k\|_2.
\end{equation}
\end{corollary}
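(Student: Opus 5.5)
The plan is to specialize the decomposition \eqref{eq:dk_def_general} to the case $c_j^\star=w_j(\alpha)$ and then invoke Theorem~\ref{prop:kernel_mismatch} and Lemma~\ref{lem:gl_decay} in turn.

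First, substituting $c_j^\star=w_j(\alpha)$ for $j=1,\dots,N$ makes every coefficient $w_j(\alpha)-c_j^\star$ in the retained-lag sum vanish. For $k\ge N-1$ this leaves
\[
d_k=-\sum_{j=N+1}^{k+1} w_j(\alpha)\,z_{k+1-j}+\xi_k ,
\]
which proves the first assertion.

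Next, I would evaluate the kernel error \eqref{eq:eps_kernel_def} under the same hypothesis. Its first sum is zero, and its second sum is exactly $\sum_{j=N+1}^{\infty}|w_j(\alpha)|=\delta_N(\alpha)$ as defined in Lemma~\ref{lem:gl_decay}, so $\varepsilon_N(\alpha;c^\star)=\delta_N(\alpha)$. Under the standing assumption $\|z_k\|_2\le M_z$ for $k=0,\dots,T$, the bound \eqref{eq:general_dk_bound} of Theorem~\ref{prop:kernel_mismatch} then gives \eqref{eq:trunc_bound_gl_special} directly.

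Alternatively, the bound can be obtained without citing the theorem. Apply the triangle inequality to the remaining tail sum and use $\|z_{k+1-j}\|_2\le M_z$; this gives $\|d_k\|_2\le M_z\sum_{j=N+1}^{k+1}|w_j(\alpha)|+\|\xi_k\|_2$, and enlarging the finite sum to the full tail $\delta_N(\alpha)$ yields the same inequality.

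Finally, \eqref{eq:trunc_rate_gl_special} follows by inserting the tail estimate $\delta_N(\alpha)\le (C_\alpha/\alpha)N^{-\alpha}$ from Lemma~\ref{lem:gl_decay}.

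One step needs checking: $\delta_N(\alpha)$ must be finite for the enlargement to be meaningful. The lemma guarantees this, since $|w_j(\alpha)|\le C_\alpha j^{-(1+\alpha)}$ with $1+\alpha>1$. Apart from this, the argument is bookkeeping, because all the analytic content sits in the lemma, which we take as given.
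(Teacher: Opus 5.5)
Your proposal is correct and follows the paper's proof. Under $c_j^\star=w_j(\alpha)$ the kernel error \eqref{eq:eps_kernel_def} collapses to $\varepsilon_N(\alpha;c^\star)=\delta_N(\alpha)$, and Theorem~\ref{prop:kernel_mismatch} and Lemma~\ref{lem:gl_decay} then give both bounds; your direct triangle-inequality alternative and your finiteness check on $\delta_N(\alpha)$ simply make explicit what the paper leaves to those two cited results.
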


\begin{proof}
If $c_j^\star=w_j(\alpha)$ for all $j\ge 1$, then \eqref{eq:eps_kernel_def} reduces to
\[
\varepsilon_N(\alpha;c^\star)=\sum_{j=N+1}^{\infty}|w_j(\alpha)|=\delta_N(\alpha).
\]
The result follows immediately from Theorem~\ref{prop:kernel_mismatch} and Lemma~\ref{lem:gl_decay}.
\end{proof}

\section{Equivalent Augmented Markovian Realization}
\label{sec:aug}

The identified Koopman--GL model \eqref{eq:memory_model_local} is finite-memory and therefore non-Markovian, i.e., the update of the next lifted state depends on the current lifted state, the current input, and a finite history of past lifted states. To enable the use of standard one-step state-space tools for analysis and control design, we now derive an equivalent Markovian realization by augmenting the lifted state with an $N$-sample history stack. This exact augmentation constitutes the second main contribution of the paper.

Starting from the finite-memory lifted model \eqref{eq:memory_model_local}, define the augmented lifted state and disturbance as
\begin{equation}
\label{eq:zaug_and_daug}
z_k^{\mathrm{aug}}
:=
\begin{bmatrix}
z_k\\
z_{k-1}\\
\vdots\\
z_{k-N+1}
\end{bmatrix}
\in \mathbb{R}^{pN},
\quad
d_k^{\mathrm{aug}}
:=
\begin{bmatrix}
d_k\\
0\\
\vdots\\
0
\end{bmatrix}
\in \mathbb{R}^{pN}.
\end{equation}

The augmented state stores the most recent $N$ lifted states, while all lifting, truncation, and exogenous disturbance effects enter only through the first block.

With this construction, the $N$-step recursion \eqref{eq:memory_model_local} is equivalently represented by the one-step Markovian system
\begin{equation}
\label{eq:aug_sys}
z_{k+1}^{\mathrm{aug}}
=
A_{\mathrm{aug}}\,z_k^{\mathrm{aug}}
+
B_{\mathrm{aug}}\,u_k
+
d_k^{\mathrm{aug}},
\quad k\ge N-1,
\end{equation}
where the augmented matrices $A_{\mathrm{aug}}\in \mathbb{R}^{pN\times pN}$ and $B_{\mathrm{aug}} \in \mathbb{R}^{pN\times m}$ have the block companion form
\begin{equation}
\label{eq:Aaug_Baug}
\begin{aligned}
& A_{\mathrm{aug}}
=
\begin{bmatrix}
\bar A-w_1(\alpha)I_p & -w_2(\alpha)I_p & \cdots & -w_N(\alpha)I_p \\
I_p                  & 0                & \cdots & 0 \\
0                    & I_p              & \ddots & \vdots \\
\vdots               & \ddots           & \ddots & 0 \\
0                    & \cdots           & I_p    & 0
\end{bmatrix},
\\
& B_{\mathrm{aug}}
=
\begin{bmatrix}
\bar B\\
0\\
\vdots\\
0
\end{bmatrix}.
\end{aligned}
\end{equation}
% \begin{remark}[Where the memory structure appears]
% \label{rem:gl_in_aug}
% I summarized this remark here:
% The memory kernel enters only through the scalar coefficients $\{c_j\}_{j=1}^N$ in the first block row of $A_{\mathrm{aug}}$.
% In the GL case, $c_j=w_j(\alpha)$ exhibits a characteristic power-law decay with $j$, thereby imposing a structured,
% physically motivated kernel. As a baseline, one may replace the constrained blocks $-c_j I_{\breve p}$ by unconstrained
% matrices $A_j\in\mathbb{R}^{\breve p\times \breve p}$ learned from data; this increases flexibility but reduces inductive bias
% and typically requires more data to generalize well.
% \end{remark}
The GL structure enters only through the scalar coefficients $\{w_j(\alpha)\}_{j=1}^N$ in the first block row of $A_{\mathrm{aug}}$. Consequently, the augmentation preserves the structured low-parameter form of the identified finite-memory model, rather than introducing a fully free matrix-valued lag representation.

\begin{theorem}[Exact augmented Markovian realization]
\label{thm:aug_equiv}
Consider the finite-memory recursion~\eqref{eq:memory_model_local} and the augmented system \eqref{eq:aug_sys}--\eqref{eq:Aaug_Baug}, driven by the same input sequence $\{u_k\}$. If
\begin{equation}
\label{eq:init_aug_prop}
z_{N-1}^{\mathrm{aug}}
=
\big[z_{N-1}^\top,\ z_{N-2}^\top,\ \dots,\ z_0^\top\big]^\top,
\end{equation}
then, for all $k\ge N-1$,
\begin{equation}
\label{eq:aug_matches_history}
z_k^{\mathrm{aug}}
=
\big[z_k^\top,\ z_{k-1}^\top,\ \dots,\ z_{k-N+1}^\top\big]^\top.
\end{equation}
Hence, the first block of $z_k^{\mathrm{aug}}$ equals $z_k$, and \eqref{eq:memory_model_local} is exactly equivalent to the one-step Markovian realization~\eqref{eq:aug_sys}.
\end{theorem}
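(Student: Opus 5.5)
The argument is an induction on $k\ge N-1$, with hypothesis \eqref{eq:aug_matches_history}. The base case $k=N-1$ is exactly the initialization \eqref{eq:init_aug_prop}. For the inductive step, I assume $z_k^{\mathrm{aug}}=[z_k^\top,\dots,z_{k-N+1}^\top]^\top$ for some $k\ge N-1$. Here $\{z_k\}$ is the sequence generated by \eqref{eq:memory_model_local} from the given input and disturbance sequences. I then compute $z_{k+1}^{\mathrm{aug}}$ from \eqref{eq:aug_sys} block row by block row, using the companion structure \eqref{eq:Aaug_Baug}.

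For the first block row, I multiply the first block row of $A_{\mathrm{aug}}$ by the hypothesized stack and add the first blocks of $B_{\mathrm{aug}}u_k$ and $d_k^{\mathrm{aug}}$. This gives
\[
(\bar A - w_1(\alpha)I_p)z_k - \sum_{j=2}^{N} w_j(\alpha) z_{k+1-j} + \bar B u_k + d_k
= \bar A z_k + \bar B u_k - \sum_{j=1}^{N} w_j(\alpha) z_{k+1-j} + d_k .
\]
By \eqref{eq:memory_model_local}, the right-hand side is exactly $z_{k+1}$. The only point needing care is the index bookkeeping. The block in column $i$ multiplies $z_{k-i+1}$, so the lag $j=i$ term is $-w_j(\alpha) z_{k+1-j}$. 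The $j=1$ term has been absorbed into the $(1,1)$ block $\bar A - w_1(\alpha) I_p$.

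For block rows $i=2,\dots,N$, $A_{\mathrm{aug}}$ has the single nonzero block $I_p$ in column $i-1$, and the $i$-th blocks of $B_{\mathrm{aug}}$ and $d_k^{\mathrm{aug}}$ vanish. So the $i$-th block of $z_{k+1}^{\mathrm{aug}}$ equals the $(i-1)$-th block of $z_k^{\mathrm{aug}}$, which is $z_{k-i+2}=z_{(k+1)-i+1}$. This is precisely the shift that gives $z_{k+1}^{\mathrm{aug}}=[z_{k+1}^\top,\dots,z_{k-N+2}^\top]^\top$ and closes the induction.

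It follows that the first block of $z_k^{\mathrm{aug}}$ is $z_k$ for all $k\ge N-1$. For the converse direction of the equivalence, I would observe that the first block row of \eqref{eq:aug_sys}, read along the stacked trajectory, is literally \eqref{eq:memory_model_local}. The remaining rows are identities. Hence any trajectory of either system initialized consistently with \eqref{eq:init_aug_prop} yields a trajectory of the other. I expect no real obstacle; the one step deserving attention is the index alignment in the first block row.
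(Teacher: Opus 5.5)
Your proposal is correct and follows essentially the same induction as the paper. The base case comes from the initialization \eqref{eq:init_aug_prop}, the first block row of \eqref{eq:aug_sys} reproduces \eqref{eq:memory_model_local} once the $j=1$ term is absorbed into the $(1,1)$ block, and the lower block rows implement the shift; your explicit index check and your remark on the converse direction only spell out details the paper leaves implicit.
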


\begin{proof}
The result follows by induction on $k\ge N-1$. The initialization \eqref{eq:init_aug_prop} establishes \eqref{eq:aug_matches_history} at $k=N-1$.

Assume that \eqref{eq:aug_matches_history} holds at time $k$. Then the first block row of \eqref{eq:aug_sys} gives
\[
z_{k+1}^{(1)}
=
\big(\bar A-w_1(\alpha)I_p\big)z_k
-
\sum_{j=2}^{N} w_j(\alpha)\,z_{k+1-j}
+
\bar B u_k
+
d_k,
\]
which is exactly \eqref{eq:memory_model_local}. Hence, the first block of $z_{k+1}^{\mathrm{aug}}$ equals $z_{k+1}$.

For the remaining blocks, the lower block rows of $A_{\mathrm{aug}}$ implement the deterministic shift
\[
z_{k+1}^{(i)} = z_k^{(i-1)}, \qquad i=2,\dots,N.
\]
Using the induction hypothesis, these blocks become $z_k,z_{k-1},\dots,z_{k-N+2}$, respectively. Hence,
\[
z_{k+1}^{\mathrm{aug}}
=
\big[z_{k+1}^\top,\ z_k^\top,\ \dots,\ z_{k-N+2}^\top\big]^\top,
\]
which is exactly \eqref{eq:aug_matches_history} at time $k+1$. This completes the induction.
\end{proof}

\section{Numerical Experiments}
\label{sec:experiments}

We evaluate the proposed Koopman--GL framework on a controlled nonlinear hereditary benchmark. The ground-truth system is a two-dimensional nonlinear map with state \(x=[x_1,x_2]^\top\) and scalar input \(u\), generated according to \eqref{eq:hereditary_conv} with \(J_{\mathrm{ref}}=400\). The instantaneous dynamics are 
\begin{equation*}
\label{eq:toy2d_f}
\begin{aligned}
f_1(x,u) &= 0.90 x_1 + 0.10 \sin(x_2) + 0.10 u, \\
f_2(x,u) &= 0.85 x_2 + 0.08 \cos(x_1) + 0.05 x_1^2 + 0.05 u,
\end{aligned}
\end{equation*}
and the hereditary contribution is defined by $g(x) = [\tanh(x_1); 0]$, so that memory acts only on the first state component. The memory kernel is chosen as the Prony series
\begin{equation*}
\label{eq:prony_kernel_exp}
h_j = a_1 \rho_1^j + a_2 \rho_2^j, \qquad j=1,\dots,J_{\mathrm{ref}},
\end{equation*}
with \((a_1,a_2)=(25,\,7.5)\times 10^{-5}\) and \((\rho_1,\rho_2)=(0.995,\,0.97)\). We generate \(N_{\mathrm{traj}}\) trajectories of length \(T\) using randomized initial conditions and persistently exciting pseudo-random binary sequence (PRBS) inputs, then split the data into disjoint training, validation, and test sets. State measurements are corrupted by additive noise during data generation.

We compare the proposed Koopman--GL model against three baselines to isolate the roles of lifting and structured memory: (i) \emph{Koopman-Markov}, the standard memoryless lifted predictor \eqref{eq:edmdc_base}; (ii) \emph{State-GL}, a linearized finite-memory predictor in the original state space with the same GL-memory structure without lifting; and (iii) \emph{State-Markov}, a linearized Markovian predictor in the original state space without lifting or memory. The state-space baselines are fit directly from data by least squares, not by local Taylor linearization. For GL-based methods, \(N\) and \(\alpha\) are selected by grid search on the validation set, evaluated by NRMSE.

\begin{equation*}
\label{eq:nrmse_exp}
\mathrm{NRMSE}
:=
\frac{\sqrt{\frac{1}{H}\sum_{k=1}^{H}\|x_k-\hat x_k\|_2^2}}
{\sqrt{\frac{1}{H}\sum_{k=1}^{H}\|x_k\|_2^2}},
\end{equation*}
reported for both one-step prediction and multi-step open-loop rollouts on the unseen test set. 
% We also verify the exact augmented realization of Section~\ref{sec:aug} by comparing the finite-memory recursion \eqref{eq:memory_model_local} against the first block of the augmented update \eqref{eq:aug_sys}; as predicted by Theorem~\ref{thm:aug_equiv}, the two implementations agree up to machine precision.

% \begin{figure*}
%     \centering
%     \includegraphics[width=1\linewidth]{EDMD_combined.png}
%     \caption{Performance analysis of the proposed Koopman--GL framework. (a) Rollout NRMSE as a function of $N$ and $\alpha$, with the best performance attained at $(N,\alpha)=(100,0.2)$. (b) Distribution of NRMSE across the unseen test set for the selected best configuration, compared with the baselines. (c) Mean relative error over a 100-step horizon for the same configuration.}
%     \label{fig:performance_analysis}
% \end{figure*}
\begin{figure*}
    \centering
    \includegraphics[width=1\linewidth]{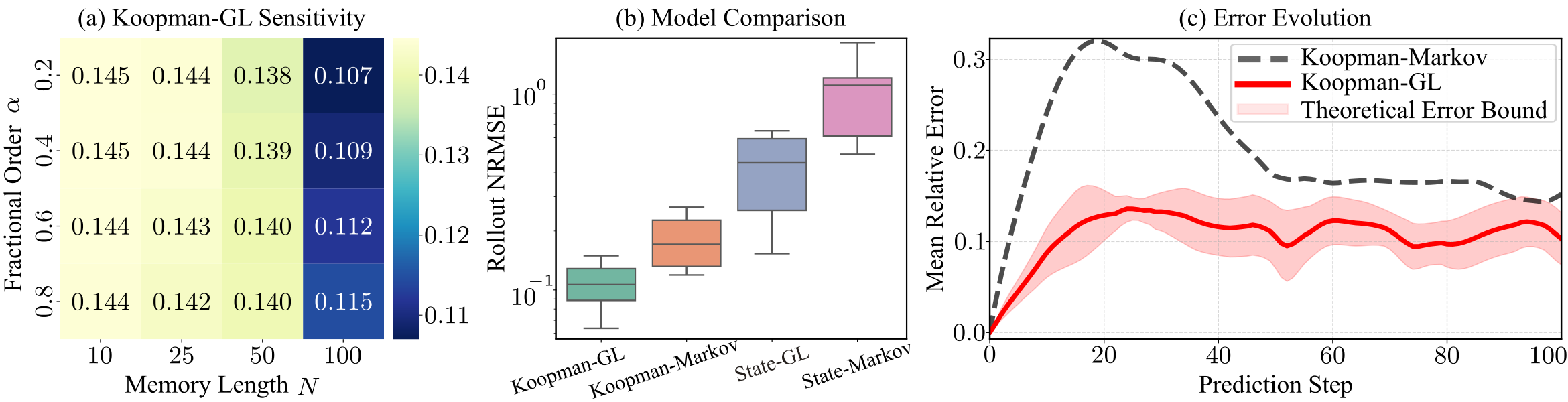}
    \caption{Performance of the proposed Koopman--GL framework. 
    (a) Rollout NRMSE as a function of $N$ and $\alpha$, with the best performance attained at $(N,\alpha)=(100,0.2)$.
    (b) Distribution of NRMSE across the unseen test set for the selected best configuration, compared with the baselines. 
    (c) Mean relative error over the rollout horizon, with the bound from Theorem~\ref{prop:kernel_mismatch}.}
    \label{fig:performance_analysis}
\end{figure*}

The results support the proposed framework both quantitatively and qualitatively. Fig.~\ref{fig:performance_analysis}(a) reports the rollout NRMSE over a grid of memory lengths $N$ and GL orders $\alpha$. The performance improves consistently as the memory length increases, indicating that a longer retained history is beneficial for this benchmark. The best result is attained at $(N,\alpha)=(100,0.2)$, which suggests that, although the true kernel is non-GL, a sufficiently long GL-structured memory can still provide an effective surrogate for the underlying hereditary effects.

Using this best configuration, Fig.~\ref{fig:performance_analysis}(b) compares the distribution of rollout errors across the unseen test set. Koopman--GL achieves the lowest median rollout NRMSE and the tightest dispersion among all compared models. Relative to Koopman-Markov, this shows that adding structured memory in lifted coordinates improves predictive accuracy beyond what can be achieved by Koopman lifting alone. The substantially larger errors of State-GL and State-Markov further indicate that memory by itself is not sufficient; the best performance is obtained when structured memory is combined with lifted nonlinear representation.

Fig.~\ref{fig:performance_analysis}(c) provides a temporal view of this improvement by showing the mean relative error over a 100-step rollout horizon for the same best configuration $(N,\alpha)=(100,0.2)$. The Koopman-Markov baseline exhibits a pronounced early error growth and maintains a higher error level throughout the rollout, whereas Koopman--GL yields a smaller initial peak and a consistently lower error profile over time. This indicates that the GL memory term helps mitigate long-horizon drift and improves temporal robustness by compensating for hereditary effects that are not captured by a purely Markovian lifted predictor.

These conclusions are consistent with the quantitative summary in Table~\ref{tab:exp_summary}. Koopman--GL achieves the lowest rollout NRMSE ($0.1070$), improving over Koopman-Markov ($0.1482$) while remaining substantially more accurate than the non-lifted baselines, State-GL ($0.3887$) and State-Markov ($0.9309$). The one-step errors are also smallest for Koopman--GL, but the larger gap in rollout accuracy is especially important, since it shows that the benefit of the proposed framework is not limited to local prediction. Overall, the results show that structured GL memory in lifted coordinates improves long-horizon prediction even for non-GL hereditary dynamics.

\begin{table}[hb]
    \centering
    \caption{Open-loop prediction accuracy on the test set.}
    \label{tab:exp_summary}
    \begin{tabular}{lcc}
        \hline
        Method & NRMSE (1-step) & NRMSE (rollout) \\
        \hline
        Koopman--GL     & 0.0057 & 0.1070 \\
        Koopman-Markov  & 0.0065 & 0.1482 \\
        State-GL        & 0.0126 & 0.3887 \\
        State-Markov    & 0.0143 & 0.9309 \\
        \hline
    \end{tabular}
\end{table}

\section{Conclusion}
\label{sec:conc}

This paper proposed a Koopman--GL finite-memory identification framework for nonlinear hereditary systems. The model combines Koopman lifting with a truncated Gr\"unwald--Letnikov memory term, yielding a finite-memory predictor that remains linear in the lifted system matrices and is learnable from data via memory-compensated least squares. An exact augmented Markovian realization was also derived, converting the finite-memory recursion into a standard one-step state-space model.

Theoretical analysis quantified the effect of finite-memory approximation, while numerical experiments on a nonlinear hereditary benchmark with a non-GL ground-truth kernel showed improved multi-step prediction over memoryless Koopman and state-space baselines. These results support truncated GL structure as an effective low-parameter surrogate for hereditary effects in Koopman-based modeling.

\bibliographystyle{IEEEtran}
\bibliography{refs}

\end{document}